\newcommand{\R}{\mathbb{R}}
\DeclareMathAlphabet{\mathcal}{OMS}{cmsy}{m}{n}
\DeclareMathOperator*{\argmin}{arg\,min}
\theoremstyle{plain}
\newtheorem{thm}{Theorem}
\newtheorem{prob_stat}{Problem}
\theoremstyle{definition}
\newtheorem{rmk}{Remark}
\newtheorem{definition}{Definition}
\newtheorem{example}{Example}
\newcommand{\C}{\mathcal{C}}
\newcommand{\E}{\mathcal{E}}
\newcommand{\K}{\mathcal{K}}
\newcommand{\norm}[1]{\left\lVert#1\right\rVert}
\newcommand{\Cross}{$\mathbin{\tikz [x=1.4ex,y=1.4ex,line width=.2ex, black] \draw (0,0) -- (1,1) (0,1) -- (1,0);}$}
\title{\LARGE \bf Extent-Compatible Control Barrier Functions}
\author{Mohit Srinivasan${}$, Matthew Abate${}$, Gustav Nilsson${}$, and Samuel Coogan%
\thanks{This work was partially supported by the National Science Foundation under Grant 1749357. The authors are with the School of Electrical and Computer Engineering, Georgia Institute of Technology, Atlanta, 30332, USA. {\tt\small \{mohit.srinivasan, matt.abate, gustav.nilsson, sam.coogan\}@gatech.edu}. M. Abate is also with the School of Mechanical Engineering and S. Coogan is also with the School of Civil and Environmental Engineering, Georgia Institute of Technology.}
}
\begin{document}

\maketitle

%%%%%%%%%%%%%%%%%%%%%%%%%%%%%%%%%%%%%%%%%%%%%%%
%% Abstract
%%%%%%%%%%%%%%%%%%%%%%%%%%%%%%%%%%%%%%%%%%%%%%%
\begin{abstract}
Safety requirements in dynamical systems are commonly  enforced with set invariance constraints over a safe region of the state space. Control barrier functions, which are Lyapunov-like functions for guaranteeing set invariance, are an effective tool to enforce such constraints and guarantee safety when the system is  represented as a point in the state space. 
In this paper, we introduce extent-compatible control barrier functions as a tool to enforce safety for the system including its volume (extent) in the physical world. In order to implement the extent-compatible control barrier functions framework, a sum-of-squares based optimization program is proposed. Since sum-of-squares programs can be computationally prohibitive, we additionally introduce a sampling based method in order to retain the computational advantage of a traditional barrier function based quadratic program controller. We prove that the proposed sampling based controller retains the guarantee for safety. Simulation and robotic implementation results are also provided.
\end{abstract}

%%%%%%%%%%%%%%%%%%%%%%%%%%%%%%%%%%%%%%%%%%%%%%%
%% Section I: Introduction
%%%%%%%%%%%%%%%%%%%%%%%%%%%%%%%%%%%%%%%%%%%%%%%
\section{Introduction}
\label{sec:intro}
A controlled dynamical system is considered \emph{safe} if it can be ensured that a given set of safe states is forward invariant under the action of a controller, i.e., the system state remains within the safe set for all time when initialized within the safe set. For example, collision avoidance between robots, obstacle avoidance during waypoint navigation, or lane changing for autonomous vehicles can be cast as invariance constraints. Techniques for enforcing safety of dynamical systems via invariance constraints include level-set  methods \cite{blanchini2008set}, methods leveraging reachability analysis  \cite{mitchell2000level}, and model-predictive control  methods \cite{gurriet_mpc_invariance}. 

When a nominal but possibly unsafe controller is available, control barrier functions (CBFs), introduced in \cite{ames2014cdcCBFs, ames_tac}, offer a particularly effective tool to enforce safety for control-affine dynamical systems.
Control barrier functions are Lyapunov-like functions which allow for filtering the nominal control inputs in order to provide formal guarantees for safety and have been applied for collision avoidance in multi-robot systems \cite{LiTAC}, adaptive cruise control \cite{ames2014cdcCBFs}, energy aware control \cite{gennaro_persist}, motion planning \cite{mohitCDC2018}, and coverage control \cite{riku_coverage}. CBFs are used in conjunction with quadratic programs (QPs) to compute at each time instant a safe control input. Such approaches are minimally evasive in the sense that the nominal controller's prescribed input is adjusted only when violation of system safety is imminent.

Traditional CBF based approaches ignore the physical dimensions of the system and the emphasis is on the forward invariance of a single point defining the system state. For example, the system state may be the center of mass of a robot or a vehicle. The volume (or extent) of the system is not explicitly incorporated in the formulation of the constraints in the QP. In this paper, we  propose a CBF based method for imposing safety constraints on a control-affine dynamical system with extent. We guarantee safety of the system using a modified CBF constraint which ensures that the extent set always remains within the safe set. We first propose implementing the resulting constraint using a sum-of-squares (SOS) optimization program \cite{sostools}. Since SOS programs can be computationally difficult for high dimensional systems and are only applicable when the safe and extent sets can be represented as polynomials, we next prove that the guarantee on system safety can be retained by considering only a finite set of sampled points on the boundary of the extent set, and we propose a QP based controller using the sampled points. Lastly, the proposed framework is validated both in simulations as well as on a robotic platform.

We note that, in principle, a system's extent can instead be accommodated by shrinking the safe set appropriately \cite{shrink_safe}. However, CBF-based methods rely on characterizing the safe set as a level-set of a function that is known in closed form. Obtaining such a closed form function to accommodate a system's extent is generally not possible. An important exception is when the system's extent and safe set can be represented as balls in the same normed space. This feature is implicitly exploited in, e.g., \cite{li_quads}, which proposes a CBF based method to avoid collisions between teams of quadrotors.  This method requires the volume of each agent to be identical and collision avoidance is then achieved by virtue of the fact that every agent's volume corresponds to identical balls in a normed space. Here, we propose a method that does not require such restrictive assumptions.

This paper is organized as follows: Section~\ref{sec:math_bckg} presents background on control barrier functions. Section~\ref{sec:Ec-CBF} proposes the extent-compatible control barrier function formulation. Section~\ref{sec:Ec-CBF_qp} introduces the QP based controller and presents two methods: a sum-of-squares (SOS) based approach and a sampling based method to guarantee safety. Section~\ref{sec:case_studies} consists of two case studies which implement the proposed framework both in numerical simulations and on a differential drive robot. Section~\ref{sec:conclusion} provides concluding remarks.

%%%%%%%%%%%%%%%%%%%%%%%%%%%%%%%%%%%%%%%%%%%%%%%
%% Section II: Mathematical Background
%%%%%%%%%%%%%%%%%%%%%%%%%%%%%%%%%%%%%%%%%%%%%%%
\section{Mathematical Background}
\label{sec:math_bckg}
Consider a control affine dynamical system
\begin{equation}
\label{eq:system}
  \dot{x}=f(x)+g(x)u \,,
\end{equation}
where $f$ and $g$ are locally Lipschitz continuous, $x \in \mathcal{D} \subset \mathbb{R}^n$ is the state of the system, $\mathcal{D}$ is assumed to be open, and $u\in\mathbb{R}^m$ denotes the control input. Associated with the system~\eqref{eq:system} is a \emph{safe set} $\C \subset \mathcal{D}$, defined as the super zero level set of a continuously differentiable function $h:\mathcal{D} \to \mathbb{R}$, i.e., $\C = \{x \in \mathcal{D} \mid h(x)\geq 0\}$. We call $h$ the \emph{safe function}.

Traditionally, the system \eqref{eq:system} is said to be safe if $x(t) \in \mathcal{C}$ for all $t \geq 0$, given any $x(0) \in \mathcal{C}$. As presented in \cite{CBFs_tutorial}, one can use zeroing control barrier functions (ZCBFs) in order to guarantee forward invariance of a safe set. To that end, a continuous function $\alpha : \mathbb{R} \rightarrow \mathbb{R}$ is \emph{class $\K$} if $\alpha(0) = 0$ and it is strictly increasing.

A continuously differentiable function $h :\mathcal{D}\to\mathbb{R}$ is a \emph{Zeroing Control Barrier Function (ZCBF)} if there exists a locally Lipschitz class $\K$ function $\alpha$ such that for all $x \in \mathcal{D}$,
\begin{equation}
\label{eq:zcbf}
\sup_{u\in \mathbb{R}^{m}} \bigg\{ \frac{\partial h(x)}{\partial x}(f(x)+g(x)u)% + \frac{\partial h(x)}{\partial x}g(x)u
+ \alpha(h(x)) \bigg\} \geq 0 \,.
\end{equation}

In the instance that a safe function $h$ defining a safe set~$\mathcal{C}$ is also a ZCBF, choosing a control input at each state $x$ from the set
\begin{equation}
\label{eq:zcbf_controls}
       U(x) =  
       \left\{ u \in \mathbb{R}^{m} \,\bigg|\, \frac{\partial h(x)}{\partial x}(f(x) + g(x)u) %\frac{\partial h(x)}{\partial x}g(x)u 
       + \alpha(h(x)) \geq 0 \right\}  
\end{equation}
 guarantees that the safe set $\mathcal{C}$ is forward invariant  \cite{CBFs_tutorial}. Specifically, if $x(0) \in \mathcal{C}$ and $U(x)$ as in \eqref{eq:zcbf_controls} is non-empty for all $x \in \mathcal{D}$, then, as shown in  \cite[Theorem 6]{mcbfs}, any continuous feedback controller $u:\mathcal{D}\to\mathbb{R}^m$ such that $u(x) \in U(x) $ for all $x \in \mathcal{D}$ is such that $x(t) \in \mathcal{C}$ for all $t \geq 0$.
 
%%%%%%%%%%%%%%%%%%%%%%%%%%%%%%%%%%%%%%%%%%%%%%%
%% Section III: Extent Compatible Control Barrier Functions
%%%%%%%%%%%%%%%%%%%%%%%%%%%%%%%%%%%%%%%%%%%%%%%

\section{Extent Compatible Control Barrier Functions}
\label{sec:Ec-CBF}
\subsection{Problem Statement}
Given a ZCBF, \cite[Theorem 6]{mcbfs} guarantees that the system state will remain within the safe set $\mathcal{C}$ when control inputs are chosen according to $u(x)\in U(x)$ for all $x \in \mathcal{D}$.  This notion of safety, however, disregards the extent of the system, i.e., when the system state is on the boundary or close to the boundary of the safe set, there still may be some part of the system's physical volume which extends into an unsafe region. To that end, we define a notion of system safety that includes the physical volume of the system. We encapsulate this notion of volume with an extent function.

\begin{definition}[Extent Function]\label{def:ef}
An \emph{extent function} $E:\mathcal{D} \times \mathcal{D} \to \mathbb{R}$ is a continuously differentiable function such that:
\begin{enumerate}
    \item $\E(x) = \{y \in \mathcal{D} \mid E(x,y)\leq 0\}$ is nonempty for all $x\in \mathcal{D}$, 
    \item  $\frac{\partial}{\partial y} E(x,y) \neq 0$ for all $(x,y)$ such that $E(x,y) = 0$, and \item for all $\delta>0$, there exists $\epsilon>0$ such that for all $x\in \mathcal{D}$, if $E(x,y)\leq \epsilon$ then $\inf_{\hat{y}:E(x,\hat{y})=0}\|y-\hat{y}\|\leq \delta$.
    \end{enumerate}
Given an extent function, the set $\mathcal{E}(x)$ above defines the extent of the system when the state of the system is $x \in \mathcal{D}$, and $\partial \mathcal{E}(x) = \{y \in \mathcal{D} \mid E(x,y) = 0 \}$ denotes the extent boundary.
\end{definition}

For example, the extent function $E(x,y)=(x-y)^T(x-y)-r^2$ implies that the extent of the system is a ball of radius $r \in \mathbb{R}_{> 0}$ centered at the system's state $x \in \mathcal{D}$.

We aim to ensure that the extent of the system is contained within the safe set for all time, i.e., $\E(x(t))\subseteq \C$ for all $t\geq 0$ along trajectories of~\eqref{eq:system}.
\begin{prob_stat}
\label{prob_stat}
Given a control affine dynamical system as in \eqref{eq:system} with extent function $E(x,y)$, synthesize a controller which guarantees $\mathcal{E}(x(t)) \subseteq \mathcal{C}$ for all $t \geq 0$ whenever $\mathcal{E}(x(0)) \subset \mathcal{C}$.
\end{prob_stat}

\subsection{Extent-Compatible Control Barrier Function (Ec-CBF)}
Given a system of the form \eqref{eq:system}, we now introduce extent-compatible control barrier functions (Ec-CBFs) which are analogous to ZCBFs, but guarantee that the entire extent set remains within the safe set under the application of a suitable control input.

\begin{definition}[Extent-Compatible Control Barrier Function (Ec-CBF)]
\label{def:ec_cbf}
A continuously differentiable function $h$ is an \emph{extent-compatible control barrier function} (Ec-CBF) for the system \eqref{eq:system} with extent function $E$ if $\frac{\partial}{\partial x}h(x)\neq 0$ for all $x$ such that $h(x)=0$ and there exists locally Lipschitz class~$\mathcal{K}$ functions $\alpha_1$ and $\alpha_2$ such that for all $x\in \mathcal{D}$ with $\mathcal{E}(x)\subseteq \mathcal{C}$ and for all $y \in \mathcal{C}$, defining
\begin{multline*}
    \mathcal{M}(x,y,u):=\\
    \frac{\partial E(x,y)}{\partial x}(f(x)+g(x)u)+\alpha_1(E(x,y)) + \alpha_2(h(y)) \, ,
\end{multline*}
 it holds that $\sup_{u\in\mathbb{R}^m} \{\mathcal{M}(x,y,u)\}\geq 0$.
\end{definition}

Given an Ec-CBF $h$, for all $x \in \mathcal{D}$, define the set
\begin{align}
    \label{eq:controls_Ec-CBF}
    \mathcal{U}(x) = \{u\in\mathbb{R}^m\mid \mathcal{M}(x,y,u)\geq 0\text{ for all }y\in\mathcal{C}\} \,.
    %\{u\in\mathbb{R}^m \mid \frac{\partial E(x,y)}{\partial x}f(x) + \frac{\partial E(x,y)}{\partial x}g(x)u \geq \nonumber \\ 
    %-\alpha_1(E(x,y)) - \alpha_2(h(y))  \text{ for all } y\in \mathcal{C}\} \, .
\end{align}
Assuming that the extent of the system initially begins inside the safe region, choosing a control input $u(x) \in \mathcal{U}(x)$ at any given state $x \in \mathcal{D}$ guarantees that $\mathcal{E}(x(t)) \subseteq \mathcal{C}$ for all $t \geq 0$, %solving Problem \ref{prob_stat}, 
as formalized in the following theorem.

\begin{thm}
\label{thm:Ec-CBF}
Consider system~\eqref{eq:system} with initial state $x(0)$, an extent function $E$, and an Ec-CBF $h$ with associated safe set $\mathcal{C}=\{x\in \mathcal{D} \mid h(x)\geq 0\} \subset \mathcal{D}$. If $\mathcal{E}(x(0)) \subset \mathcal{C}$, then any continuous feedback controller $u:\mathcal{D}\to\mathbb{R}^m$ such that $u(x) \in \mathcal{U}(x)$ for all $x \in \mathcal{D}$ guarantees that $\mathcal{E}(x(t)) \subseteq \mathcal{C}$ for all $t \geq 0$.
\end{thm}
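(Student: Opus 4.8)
The plan is to bootstrap from the scalar ZCBF invariance result \cite[Theorem 6]{mcbfs} by freezing the second argument of the extent function. First I would note that, since $f$ and $g$ are locally Lipschitz and $u$ is continuous, the closed-loop vector field $x\mapsto f(x)+g(x)u(x)$ is locally Lipschitz, so from $x(0)$ there is a unique solution $x(\cdot)$ on a maximal interval, along which all claims below are made. The crucial rewriting is that requiring $u(x)\in\mathcal{U}(x)$ for all $x\in\mathcal{D}$, with $\mathcal{U}$ as in \eqref{eq:controls_Ec-CBF}, is the same as requiring that for every fixed $y\in\C$ and every $x\in\mathcal{D}$,
\begin{equation*}
\frac{\partial E(x,y)}{\partial x}\bigl(f(x)+g(x)u(x)\bigr)+\alpha_1\bigl(E(x,y)\bigr)+\alpha_2\bigl(h(y)\bigr)\ge 0 .
\end{equation*}

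Next I would specialize to $y\in\partial\C$, so that $h(y)=0$, hence $\alpha_2(h(y))=0$, and the displayed inequality reduces to $\frac{\partial}{\partial x}\psi_y(x)\,(f(x)+g(x)u(x))+\alpha_1(\psi_y(x))\ge 0$ for the frozen function $\psi_y(x):=E(x,y)$, which is continuously differentiable because $E$ is. This says precisely that $\psi_y$ is a ZCBF in the sense of \eqref{eq:zcbf} with locally Lipschitz class $\K$ function $\alpha_1$, and that $u(x)$ lies, for every $x$, in the set \eqref{eq:zcbf_controls} associated with $\psi_y$ and $\alpha_1$ — which is therefore nonempty everywhere, since the feedback $u$ is assumed to exist. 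Moreover $x(0)\in\{x:\psi_y(x)\ge 0\}$: were $E(x(0),y)<0$, then $y$ would be an interior point of $\mathcal{E}(x(0))$, and since $\frac{\partial}{\partial x}h\ne 0$ on $\{h=0\}$ by Definition~\ref{def:ec_cbf}, some nearby point with $h<0$ would also lie in $\mathcal{E}(x(0))$, contradicting $\mathcal{E}(x(0))\subseteq\C$. Applying \cite[Theorem 6]{mcbfs} to the ZCBF $\psi_y$ then gives $E(x(t),y)\ge 0$ for all $t$; as $y\in\partial\C$ was arbitrary, the \emph{open} extent $\{y:E(x(t),y)<0\}$ never meets $\partial\C$.

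It remains to upgrade this to $\mathcal{E}(x(t))\subseteq\C$, and I expect this topological step to be the main obstacle. By property~2 of Definition~\ref{def:ef}, $\{y:E(x,y)=0\}$ is the boundary of $\{y:E(x,y)<0\}$, so $\mathcal{E}(x)$ is the closure of the open extent, and the open extent cannot develop a new connected component in the interior of $\mathcal{D}$ without passing through a level-$0$ critical point of $E(x,\cdot)$, which property~2 excludes. Since $\mathcal{D}\setminus\partial\C$ is the disjoint union of the open sets $\mathrm{int}\,\C$ and $\mathcal{D}\setminus\C$ and the open extent avoids $\partial\C$ at every $t$, each of its components sits entirely in one of these two pieces; at $t=0$ the inclusion $\mathcal{E}(x(0))\subseteq\C$ puts all of it in $\mathrm{int}\,\C$, and by continuity of $t\mapsto x(t)$ together with the uniform regularity of property~3 of Definition~\ref{def:ef} — which makes $t\mapsto\mathcal{E}(x(t))$ vary continuously — no component can migrate out of $\mathrm{int}\,\C$ without its closure crossing $\partial\C$. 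Hence the open extent stays in $\mathrm{int}\,\C$, and taking closures (using that $\C$ is closed) gives $\mathcal{E}(x(t))\subseteq\C$ for all $t$. An equivalent and perhaps more economical route for this last step is a first-exit-time argument: with $T^*$ the supremum of times for which $\mathcal{E}(x(\cdot))\subseteq\C$, one shows $\mathcal{E}(x(T^*))\subseteq\C$ by continuity and closedness of $\C$, extracts a contact point $y^*\in\partial\mathcal{E}(x(T^*))\cap\partial\C$ using $\partial_x h\ne 0$ on $\{h=0\}$, and then contradicts $T^*<\infty$ via $\frac{\partial E(x(T^*),y^*)}{\partial x}(f(x(T^*))+g(x(T^*))u(x(T^*)))\ge 0$, which holds because $u(x(T^*))\in\mathcal{U}(x(T^*))$ and $E(x(T^*),y^*)=h(y^*)=0$.
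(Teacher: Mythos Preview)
Your main argument is correct and rests on the same mechanism as the paper's: freeze $y\in\partial\mathcal{C}$ so that $\alpha_2(h(y))=0$, reduce the Ec-CBF constraint to the scalar inequality $\dot w\ge -\alpha_1(w)$ for $w(t)=E(x(t),y)$, and invoke the comparison principle (which is what the cited ZCBF theorem encapsulates). The paper merely packages this as a proof by contradiction: it assumes a violation, uses a ``continuity argument'' to extract a single $y'''\in\partial\mathcal{C}$ with $E(x(t'''),y''')<0$, and then applies the comparison lemma directly to $w(t)=E(x(t),y''')$ to obtain $w(t''')\ge 0$; you instead run the invariance for every $y\in\partial\mathcal{C}$ and then perform a connectedness/continuity pass, and the paper's continuity step and your topological step are dual and at comparable rigor. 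One caution on your alternative first-exit-time sketch: the inequality $\frac{\partial E(x(T^*),y^*)}{\partial x}\bigl(f+gu\bigr)\ge 0$ at the contact point alone does not contradict $T^*<\infty$, since a nonnegative instantaneous derivative does not keep $E(x(t),y^*)$ nonnegative past $T^*$, and the escaping point for $t>T^*$ need not be $y^*$; you would still need the full comparison-lemma step for $w(t)=E(x(t),y^*)$ (exactly as in your main argument, and as the paper does explicitly) to close that route.
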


\begin{proof}
% Comparison lemma version
Suppose by contradiction that the assumptions of the theorem hold but there exists a time $t' > 0$ such that $\mathcal{E}(x(t')) \not\subseteq \mathcal{C}$, i.e., there exists a point $y' \in \mathcal{D}$ such that $E(x(t'), y') \leq 0$ and $h(y') < 0$. If $E(x(t'), y') =0$, then because $\frac{\partial}{\partial y} E(x(t'),y') \neq 0$ by the definition of extent function, in any neighborhood of $y'$ there exists $y''$ such that $E(x(t'), y'') < 0$ and $h(y'')<0$, hence we assume $E(x(t'), y')<0$ without loss of generality. A continuity argument then implies there exists $0<t'''\leq t'$ and $y'''\in \mathcal{D}$ such that $h(y''')=0$ and $E(x(t'''), y''') < 0$.

Let $w(t) = E(x(t), y''')$.  Then, for all $t\geq 0$,
\begin{align*}
\dot{w}(t) 
   &= \frac{\partial E(x(t), y''')}{\partial x}\big(f(x(t))+g(x(t))u(x(t))\big)%  + \frac{\partial E(x(t),y''')}{\partial x}g(x(t))u(x(t)) \\
   \\
            &\geq -\alpha_{1}(w(t)) -\alpha_{2}(h(y'''))\\
            & \geq -\alpha_{1}(w(t)) \,,
\end{align*}
where the first inequality holds since $h$ is an Ec-CBF and the second inequality follows because $-\alpha_{2}(h(y''')) = 0$.

Now, consider the initial value problem $\dot{\eta}(t) = -\alpha_1(\eta(t))$ with $\eta(0) = w(0)$.   Since $\mathcal{E}(x(0)) \subset \mathcal{C}$ and  $\frac{\partial }{\partial y}h (y''')\neq 0$, it must hold that $\eta(0) = w(0) \geq 0$. %From the fact in~\eqref{eq:scalar_sys1}--\eqref{eq:scalar_sys3} that $\dot{w}(t) \geq -\alpha_1(w(t))$ together with $\eta(t) \geq  0$ for all $t \geq 0$, 
The comparison lemma~\cite[Lemma 3.4]{khalil2002nonlinear} then implies $w(t) \geq \eta(t) \geq 0$ for all $t \geq 0$.
%$\dot{\eta}(t) = -\alpha(\eta(t))$ and $\eta(0) = w(0) > 0$. From Lemma 4.4 in \cite{khalil2002nonlinear}, the solution to the IVP is given by $\eta(t) > 0$ for all $t \geq 0$. 
This contradicts the claim that $w(t''') = E(x(t'''), y''') < 0$. Hence, $\mathcal{E}(x(t)) \subseteq \mathcal{C}$ for all $t \geq 0$. 
\end{proof}

\begin{rmk}\label{rmk:thmebf}
From the proof of Theorem~\ref{thm:Ec-CBF}, we observe that $\sup_u \{\mathcal{ M}(x,y,u)\} \geq 0$ from Definition \ref{def:ec_cbf} only has to hold for  $y \in \mathcal{D}$ such that $E(x,y)$ is in a neighborhood of $0$. 
Part 3) in Definition~\ref{def:ef} ensures that for a small enough neighborhood, such $y$ must be arbitrarily close to the the boundary $\partial \mathcal{E}(x).$% this can be guaranteed by considering proximity can be characterized by $|E(x,y)| < \epsilon$ for some $\epsilon > 0$. 
\end{rmk}

Next, we propose an optimization scheme for selecting inputs from the set $\mathcal{U}(x)$ to guarantee safety while minimally deviating from some prescribed nominal controller.

%%%%%%%%%%%%%%%%%%%%%%%%%%%%%%%%%%%%%%%%%%%%%%%%
%% Subsection IV: Minimally Invasive Quadratic Program Based Controller
%%%%%%%%%%%%%%%%%%%%%%%%%%%%%%%%%%%%%%%%%%%%%%%%
\section{Minimally Invasive Quadratic Program Controller}
\label{sec:Ec-CBF_qp}
Consider the scenario where a system designer would like to employ some nominal feedback control policy $k : \mathcal{D} \rightarrow \mathbb{R}^m$ on the system \eqref{eq:system} with extent. When safety of the system under the control policy $k$ cannot be verified a priori, we propose incorporating \eqref{eq:controls_Ec-CBF} as a constraint at runtime to obtain a minimally invasive quadratic program (QP) based controller using a Ec-CBF, similar to the technique proposed in \cite{ames2014cdcCBFs, ames_tac} for ZCBFs. This procedure leads to a control law which ensures that the extent set of the system is contained within the safe set $\mathcal{C}$ for all $t \geq 0$, given that $\mathcal{E}(x(0)) \subset \mathcal{C}$. In particular, we propose a quadratic program based controller of the form
\begin{equation}
    \label{eq:qp}
    u_\text{QP}(x)= \argmin_{u\in\mathcal{U}(x)}  \:\; \|u-k(x)\|_2^2 \,.
    %\text{s.t.}& \quad\text{\eqref{eq:Ec-CBF} holds $\forall y\in \C$.} %\nonumber
\end{equation}

The above QP is minimally invasive in the sense that it guarantees safety of the system including its extent, while following the nominal input $k$ with minimal deviation. However, for fixed $x$, $\mathcal{U}(x)$ is defined from \eqref{eq:controls_Ec-CBF} and requires a given inequality to hold for all $y$, leading to an infinite number of linear constraints on $u$. In the remainder of this section, we present two approaches to obtaining computationally tractable programs that retain safety guarantees. 

\subsection{Optimization Over Sum-of-Squares Polynomials}
We first propose recasting \eqref{eq:qp} as a sum-of-squares (SOS) optimization problem in the independent variable $y$. 

\begin{definition}[Sum-of-Squares (SOS) Polynomials]
\label{def:sos}
Let $\mathbb{R}[y]$ be the set of all polynomials in $y \in \mathbb{R}^n$. Then
\begin{equation*}
    \Sigma[y] = \left\{ s(y) \in \mathbb{R}[y] \,\bigg{\vert}\, s(y) = \sum\limits_{i=1}^{\ell} p_{i}(y)^2,\, p_{i}(y) \in \mathbb{R}[y]\right\}
\end{equation*}
where $\ell \in \mathbb{N}$, is the set of SOS polynomials. Note that if $s(y) \in \Sigma[y]$, then $s(y) \geq 0$ for all $y \in \mathbb{R}^n$.
\end{definition}

\begin{thm}
\label{prop:sos}
Consider system~\eqref{eq:system} with initial state $x(0)$, an extent function $E$, and an Ec-CBF $h$ with associated safe set $\mathcal{C}=\{x\in \mathcal{D} \mid h(x)\geq 0\} \subset \mathcal{D}$. Further assume $E(x,y)$, $h(y)$, $\alpha_1(E(x,y))$ and $\alpha_2(h(y))$ are polynomial in $y$ for any fixed $x$. Let $k : \mathcal{D} \rightarrow \mathbb{R}$ be a continuous nominal controller and suppose $\mathcal{E}(x(0)) \subset \mathcal{C}$. If the set
\begin{align*}
\widetilde{\mathcal{U}}(x) = \biggl\{ u \in \R^m \, \biggr\lvert \, &\frac{\partial E(x,y)}{\partial x}(f(x)+g(x)u)+\alpha_1(E(x,y))\\   &+\alpha_2(h(y))-s(y)h(y)\in\Sigma[y] \\
    &\text { for some $s(y) \in \Sigma[y]$} \biggr\}
\end{align*} is non-empty for all $x \in \mathcal{D}$, then the solution $x(t)$ of  system~\eqref{eq:system} with 
\begin{equation}
u(x)= u_\text{SOS}(x) := \argmin_{u\in \widetilde{\mathcal{U}}(x)} \|u-k(x)\|_2^2  \label{eq:SOSobj}
\end{equation}
is such that $\mathcal{E}(x(t)) \subseteq \mathcal{C}$ for all $t \geq 0$.
\end{thm}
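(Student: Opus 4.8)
The plan is to show that the SOS-based constraint set $\widetilde{\mathcal{U}}(x)$ is always contained in the set $\mathcal{U}(x)$ defined in \eqref{eq:controls_Ec-CBF}, so that the controller $u_\text{SOS}$ selects at every state an input that is admissible in the sense of Definition~\ref{def:ec_cbf}, and then to invoke Theorem~\ref{thm:Ec-CBF} (more precisely, to rerun its comparison-lemma argument) to conclude safety of the extent.

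First I would fix $x \in \mathcal{D}$ and take any $u \in \widetilde{\mathcal{U}}(x)$ together with a witness $s(y) \in \Sigma[y]$ such that
$$ q(y) := \frac{\partial E(x,y)}{\partial x}(f(x)+g(x)u)+\alpha_1(E(x,y))+\alpha_2(h(y))-s(y)h(y) \in \Sigma[y]. $$
By Definition~\ref{def:sos}, both $q(y)\geq 0$ and $s(y)\geq 0$ for all $y\in\mathbb{R}^n$. Hence for every $y\in\mathcal{C}$, i.e.\ every $y$ with $h(y)\geq 0$, we get $s(y)h(y)\geq 0$ and therefore
$$ \mathcal{M}(x,y,u) = q(y) + s(y)h(y) \geq 0. $$
This is exactly where the Positivstellensatz-style multiplier $s(y)h(y)$ enters: it certifies nonnegativity of $\mathcal{M}(x,\cdot,u)$ on the semialgebraic set $\{h\geq 0\}=\mathcal{C}$ rather than on all of $\mathbb{R}^n$. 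Thus $u\in\mathcal{U}(x)$, which proves $\widetilde{\mathcal{U}}(x)\subseteq\mathcal{U}(x)$ for all $x\in\mathcal{D}$.

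Next I would check that $u_\text{SOS}$ in \eqref{eq:SOSobj} is well defined. The set $\widetilde{\mathcal{U}}(x)$ is nonempty by hypothesis and is convex, since the displayed membership condition is jointly affine in $(u,s)$ and $\Sigma[y]$ is a convex cone, so a convex combination of two feasible pairs $(u_1,s_1)$ and $(u_2,s_2)$ is again feasible; minimizing the strictly convex quadratic $\|u-k(x)\|_2^2$ over it yields a unique minimizer $u_\text{SOS}(x)\in\widetilde{\mathcal{U}}(x)\subseteq\mathcal{U}(x)$. I would remark that closedness of $\widetilde{\mathcal{U}}(x)$, which guarantees the minimum is attained, follows once a degree bound is imposed on $s$, as in any practical implementation; alternatively one simply presumes the argmin exists, as is implicit in the statement. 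Since the solution $x(t)$ of \eqref{eq:system} under $u=u_\text{SOS}$ then satisfies $u(x(t))\in\mathcal{U}(x(t))$ for all $t$, and since $\mathcal{E}(x(0))\subset\mathcal{C}$ while $h$ is an Ec-CBF (so in particular $\frac{\partial}{\partial x}h(x)\neq 0$ on $\{h=0\}$), the contradiction argument from the proof of Theorem~\ref{thm:Ec-CBF} carries over: assuming $\mathcal{E}(x(t'))\not\subseteq\mathcal{C}$ for some $t'$ produces $t'''$ and $y'''$ with $h(y''')=0$ and $E(x(t'''),y''')<0$, while the comparison lemma applied to $w(t)=E(x(t),y''')$ forces $w(t)\geq 0$ for all $t$ --- a contradiction. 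Hence $\mathcal{E}(x(t))\subseteq\mathcal{C}$ for all $t\geq 0$.

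I expect the only real obstacle to be mild regularity bookkeeping rather than anything substantive: confirming that the comparison-lemma step of Theorem~\ref{thm:Ec-CBF} does not actually require continuity of the feedback law (only that $t\mapsto x(t)$ is a solution with $u(x(t))\in\mathcal{U}(x(t))$), and handling well-posedness of the argmin in \eqref{eq:SOSobj} without a priori closedness of $\widetilde{\mathcal{U}}(x)$. The core inclusion $\widetilde{\mathcal{U}}(x)\subseteq\mathcal{U}(x)$ itself is immediate from nonnegativity of SOS polynomials, and it is worth noting explicitly that the inclusion may be strict, so the SOS program is sound but generally conservative.
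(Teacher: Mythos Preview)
Your proposal is correct and follows essentially the same route as the paper: both show $\widetilde{\mathcal{U}}(x)\subseteq\mathcal{U}(x)$ via nonnegativity of SOS polynomials on $\mathcal{C}$ and then appeal to Theorem~\ref{thm:Ec-CBF}. The only notable difference is that the paper explicitly establishes continuity of $u_\text{SOS}$ (via quasi-convexity of the constraint and objective together with \cite[Proposition~8]{mcbfs}) so that Theorem~\ref{thm:Ec-CBF} can be invoked as stated, whereas you treat continuity as a side regularity issue and propose to rerun the comparison-lemma argument directly.
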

\begin{proof}
For each $x$, the optimization problem~\eqref{eq:SOSobj} is feasible by hypothesis, and the fact that $u \in \mathcal{\widetilde U}(x)$ implies 
\begin{multline*}
\frac{\partial E(x,y)}{\partial x}(f(x)+g(x)u)+\alpha_{1}(E(x,y))\\
+ \alpha_{2}(h(y)) - s(y)h(y) \geq 0 
\end{multline*}
for all $y \in \mathbb{R}^n$, since the left hand side of the inequality is required to be an SOS polynomial. %We know that $s(y)$ is an SOS polynomial and hence $s(y) \geq 0$ for all $y \in \mathbb{R}^n$.
Next, observe that $s(y)h(y)\geq 0$ for all $y\in\mathcal{C}$ since $s(y)$ is a SOS polynomial and $h(y) \geq 0$. Hence the requirement that $u \in \mathcal{\widetilde{U}}(x)$ implies
\begin{multline*}
    \frac{\partial E(x,y)}{\partial x}(f(x)+g(x)u) +\alpha_{1}(E(x,y))+\alpha_{2}(h(y)) \geq 0 \,
\end{multline*}
for all $y\in\mathcal{C}$, i.e., $u\in \mathcal{U}(x)$ as defined in \eqref{eq:controls_Ec-CBF}. In addition, since for all $x \in \mathcal{D}$, the constraint in $\widetilde{\mathcal{U}}(x)$
is quasi-convex in $u$, $\|u\|$ is quasi-convex in $u$, and $k$ is continuous in $x$, using~\cite[Proposition 8]{mcbfs}, we conclude that the controller is continuous.
Hence, from Theorem~\ref{thm:Ec-CBF}, $\mathcal{E}(x(t)) \subseteq \mathcal{C}$ for all $t \geq 0$.
\end{proof}

Proposition \ref{prop:sos} leads to an online SOS based controller by solving \eqref{eq:SOSobj} at each current state $x$. To implement this controller in, e.g., SOSTOOLS~\cite{sostools}, the degree of the SOS decision polynomial $s(y)$ in the constraint defining $\widetilde{\mathcal{U}}(x)$ is fixed a priori. In addition, the quadratic cost in \eqref{eq:SOSobj} is recast in epigraph form to obtain a new linear cost; by exploiting the \textit{Schur complement} test for positive semi-definiteness  \cite{vandenberghe1996semidefinite}, the initial formulation \eqref{eq:SOSobj} can be equivalently rewritten as
\begin{equation*}
   u_\text{SOS}(x) =  \argmin_{u \in \widetilde{\mathcal{U}}(x)} \min_{\delta \in \Delta(u, x)} \delta \,, \label{eq:SOSobjdelta} 
\end{equation*}
with 
\begin{multline*}
    \Delta(u, x) =  \\ \left\{ \delta \in \mathbb{R} \bigm|  
    \begin{bmatrix}
    I & u\\
    u^T & \delta+2k(x)^Tu-k(x)^Tk(x) \end{bmatrix} \succeq 0
    \right\} \,.
\end{multline*}

The above SOS approach allows us to adopt a tractable method to guarantee safety for the system. However, with increasing system dimensionality, SOS programs are known to become computationally difficult. Hence, we next propose an alternative efficient sampling based approach.

\subsection{A Sampling Based Approach to Set Invariance with Extent}
\label{sec:sampling_method_Ec-CBF}
In this subsection, we propose an alternative relaxation of \eqref{eq:qp} which retains the computational advantages of the original QP formulation for ZCBFs. The intuition is to enforce the constraint in~\eqref{eq:controls_Ec-CBF} on the boundary of the extent set, but only for a finite number of sampled points.

To obtain a finite number of sampled points, we discretize the boundary of the extent set $\partial \mathcal{E}(x)$.  The following theorem guarantees safety by introducing a constraint on the control input for each sampled point.
\begin{thm}
\label{thm:sampling}
Consider system~\eqref{eq:system} with initial state $x(0)$, an extent function $E$, and an Ec-CBF $h$ with associated safe set $\mathcal{C}=\{x\in \mathcal{D} \mid h(x)\geq 0\} \subset \mathcal{D}$. Further assume that the domain $\mathcal{D}$ is bounded, $\mathcal{E}(x)$ is bounded for all $x \in \mathcal{D}$, 
let $M>0$ be a bound on the magnitude of the control input, and for some $\tau>0$  let $\partial \mathcal{E}_{\tau}(x) \subset \partial \mathcal{E}(x)$ be a finite set such that for all $y \in \partial \mathcal{E}(x)$, it holds that $\min_{\widetilde{y} \in \partial \mathcal{E}_{\tau}(x) } \|\widetilde{y}-y\| \leq \tau/2$.
Additionally, let
\begin{align}
    \label{eq:defA}A &\geq \sup_{y \in \mathcal{D}} \norm{\frac{\partial}{\partial y} h(y)} \,, \\
    B &\geq \sup_{x,y \in \mathcal{D},  \, \|u\| < M} \norm{  \frac{\partial E (x,y)}{\partial x} (f(x) + g(x)u)} \,. \label{eq:defB}
\end{align}

Consider the set
\begin{multline*}
    \mathcal{\widehat U}(x) = \biggl\{ u \in \mathbb{R}^m \bigm\lvert \|u\| \leq M \text{ and } \\
     \frac{\partial E(x, y^{*})}{\partial x} (f(x) + g(x)u) + \gamma \cdot h(y^{*}) \geq (B + \gamma A) \tau \\
    \text { holds for all $y^*\in \partial \mathcal{E}_{\tau}(x)$} \biggr\}
\end{multline*}
where $\gamma > 0$ is a constant. Let $k : \mathcal{D} \rightarrow \mathbb{R}$ be a continuous nominal control input. If for all $x \in \mathcal D$ the set $\mathcal{\widehat U}(x)$ is non-empty and $\mathcal{E}(x(0))\subset \mathcal{C}$, then the solution $x(t)$ to the system~\eqref{eq:system} with $u = u_{\text{sampled}}(x)$, where
\begin{align}
    \label{eq:sampled_qp}
    u_{\text{sampled}}(x)= \argmin_{u\in \mathcal{\widehat U}(x)} \|u-k(x)\|_2^2 \,,
\end{align}
is such that $\mathcal{E}(x(t)) \subseteq \mathcal{C}$ for all $t \geq 0$. 
Moreover, the controller $u_{\text{sampled}}(x)$ is continuous with respect to $x$ for all $x \in \mathcal{D}$.
\end{thm}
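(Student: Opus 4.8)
\emph{Proof plan.} The plan is to re-run the contradiction argument of Theorem~\ref{thm:Ec-CBF}, but to replace the Ec-CBF inequality at the breaching point by the finitely many sampled inequalities defining $\widehat{\mathcal U}(x)$, the gap being absorbed by the margin $(B+\gamma A)\tau$; the continuity claim is then dispatched exactly as in the proof of Theorem~\ref{prop:sos}. I assume throughout, as the statement implicitly does, that the closed-loop solution $x(\cdot)$ is defined on $[0,\infty)$ and stays in $\mathcal D$; since every $u\in\widehat{\mathcal U}(x)$ has $\|u\|\le M$ and $f,g$ are continuous on the bounded set $\mathcal D$, the velocity $\dot x=f(x)+g(x)u_{\text{sampled}}(x)$ is uniformly bounded, so the constants in \eqref{eq:defA}--\eqref{eq:defB} are available along the trajectory. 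For the reduction, suppose $\mathcal E(x(t'))\not\subseteq\mathcal C$ for some $t'>0$; arguing exactly as in the proof of Theorem~\ref{thm:Ec-CBF} (the relevant parts use only $\tfrac{\partial}{\partial y}E\neq0$ on $\partial\mathcal E$, $\tfrac{\partial}{\partial x}h\neq0$ on $\{h=0\}$, continuity of $t\mapsto x(t)$, and $\mathcal E(x(0))\subset\mathcal C$, and not the input constraint) yields a \emph{fixed} point $y'''$ with $h(y''')=0$, a time $0<t'''\le t'$ with $w(t''')<0$ where $w(t):=E(x(t),y''')$, and $w(0)\ge0$.

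The heart of the argument is to show that $\dot w(t)\ge 0$ whenever $w(t)\le\epsilon$, for a suitable fixed $\epsilon>0$. First I would use part 3) of Definition~\ref{def:ef} with $\delta=\tau/2$ to pick $\epsilon>0$ --- uniform in $x$, hence in $t$, and independent of $y'''$ --- so that $E(x,y)\le\epsilon$ forces $y$ to lie within $\tau/2$ of $\partial\mathcal E(x)$. Then, for $t$ with $w(t)\le\epsilon$, there is $\hat y\in\partial\mathcal E(x(t))$ with $\|y'''-\hat y\|\le\tau/2$, and by the sampling density a sample $y^*\in\partial\mathcal E_\tau(x(t))$ with $\|\hat y-y^*\|\le\tau/2$, whence $\|y'''-y^*\|\le\tau$. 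Feasibility, $u_{\text{sampled}}(x(t))\in\widehat{\mathcal U}(x(t))$, gives $\tfrac{\partial E(x(t),y^*)}{\partial x}\dot x(t)\ge(B+\gamma A)\tau-\gamma h(y^*)$. Splitting $\dot w(t)=\tfrac{\partial E(x(t),y^*)}{\partial x}\dot x(t)+\big(\tfrac{\partial E(x(t),y''')}{\partial x}-\tfrac{\partial E(x(t),y^*)}{\partial x}\big)\dot x(t)$, bounding the second term by $B\|y'''-y^*\|\le B\tau$ (this is the role of $B$ from \eqref{eq:defB}: it must control the variation in $y$ of the drift term $\tfrac{\partial E(x,y)}{\partial x}(f(x)+g(x)u)$, which on the bounded domain follows from the regularity of $E$), and using $|h(y^*)-h(y''')|\le A\|y^*-y'''\|\le A\tau$ from \eqref{eq:defA} together with $h(y''')=0$, one gets
\[
\dot w(t)\ \ge\ (B+\gamma A)\tau-B\tau-\gamma h(y^*)\ =\ \gamma A\tau-\gamma h(y^*)\ \ge\ \gamma A\tau-\gamma A\tau\ =\ 0 .
\]

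Given this, I would finish as in Theorem~\ref{thm:Ec-CBF}: if $T:=\inf\{t>0:w(t)<0\}$ were finite, continuity and $w(0)\ge0$ give $w(T)=0$, yet $w<\epsilon$ on a short interval $[T,T+\sigma]$ forces $\dot w\ge0$, hence $w$ nondecreasing and $w\ge w(T)=0$ there --- contradicting $w<0$ just past $T$. So $w(t)\ge0$ for all $t$, in particular $w(t''')\ge0$, a contradiction; thus $\mathcal E(x(t))\subseteq\mathcal C$ for all $t\ge0$. (One could instead avoid singling out $y'''$ by applying a comparison/Danskin argument to $t\mapsto\min_{y:\,h(y)\le0}E(x(t),y)$, which is locally Lipschitz because $\mathcal D$ is bounded, with the same margin estimate giving a nonnegative lower Dini derivative wherever the function is small.) For continuity of $u_{\text{sampled}}$: $\widehat{\mathcal U}(x)$ is the intersection of $\{\|u\|\le M\}$ with finitely many half-spaces whose data are affine (hence quasiconvex) in $u$ and continuous in $x$ --- provided the finite set $\partial\mathcal E_\tau(x)$ is generated so as to depend continuously on $x$, e.g.\ via a fixed parametrization of $\partial\mathcal E(x)$ --- and is nonempty by hypothesis, so \eqref{eq:sampled_qp} has a unique minimizer for each $x$; \cite[Proposition~8]{mcbfs} then gives continuity, exactly as for $u_{\text{SOS}}$ in the proof of Theorem~\ref{prop:sos}.

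I expect the main obstacle to be the estimate in the second paragraph. Unlike in Theorem~\ref{thm:Ec-CBF}, $\widehat{\mathcal U}(x)$ constrains $\dot E$ only at the samples on $\partial\mathcal E(x)$, while the breaching point $y'''$ is a priori unrelated to those samples and need not lie on $\partial\mathcal E(x(t))$ at all. The reconciliation --- and the reason for the margin $(B+\gamma A)\tau$ rather than $0$ --- is that $y'''$ only endangers invariance when $E(x(t),y''')$ is near $0$, and then part 3) of Definition~\ref{def:ef} pins $y'''$ within $\tau/2$ of $\partial\mathcal E(x(t))$, hence within $\tau$ of a sample, so the extra slack can exactly pay for transferring the sampled inequality to $y'''$. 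Getting the two $\tau/2$ gaps and the constants $A,B$ to cancel cleanly, and making precise the control of the drift term's $y$-variation by $B$ (which is where the smoothness of $E$ and the boundedness of $\mathcal D$ enter), is the delicate part.
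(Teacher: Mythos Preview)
Your proof is correct and follows the same core route as the paper: use part~3) of Definition~\ref{def:ef} to pin any point with small $E(x,\cdot)$ close to $\partial\mathcal E(x)$, then close to a sample, and use $A,B$ as Lipschitz constants to transfer the sampled inequality; continuity is dispatched via \cite[Proposition~8]{mcbfs} in both.

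The one difference worth noting is in the slack accounting. The paper takes open balls of radius $3\tau/4$ around the samples (implicitly $\delta=\tau/4$ in part~3)), obtains $\|y-y^*\|\le 3\tau/4$, and hence a strictly positive residual $(B+\gamma A)\tau/4$. This lets it define $\alpha_1(s)=\frac{(B+\gamma A)\tau}{4\epsilon}s$, $\alpha_2(s)=\gamma s$, verify $\mathcal M(x,y,u)\ge0$ on a full $\epsilon$-tube around $\partial\mathcal E(x)$, and invoke Theorem~\ref{thm:Ec-CBF} via Remark~\ref{rmk:thmebf}. You instead take $\delta=\tau/2$, get $\|y'''-y^*\|\le\tau$ and residual exactly~$0$, and close with a direct first-crossing argument at the fixed test point $y'''$ rather than constructing class-$\mathcal K$ functions. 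Both are valid; the paper's packaging is more modular (it explicitly exhibits $u_{\text{sampled}}(x)\in\mathcal U(x)$), while yours is more self-contained. Your caveat that $B$ as defined in~\eqref{eq:defB} is a magnitude bound rather than a Lipschitz-in-$y$ constant, and that the transfer step really needs the latter, is well spotted --- the paper glosses over the same point by simply citing ``the mean value theorem'' for its inequality~\eqref{eq:sampB}.
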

\begin{proof}
Introduce the open set
\begin{equation*}
\mathcal{B} = \bigcup_{y^*\in \partial \mathcal{E}_\tau(x)} B^o_\frac{3\tau}{4}(y^*) \,,
\end{equation*}
where $B^o_\frac{3\tau}{4}(y^*)$ denotes an open ball with radius $\frac{3\tau}{4}$ centered around $y^*$. Choose $\epsilon > 0$ such that for all $x$, 
\begin{equation*}
\mathcal{\bar E}(x) = \{y \in \mathcal{D} \mid |E(x,y)| \leq \epsilon\}\subseteq \mathcal{B} \,.
\end{equation*}
Such a choice of $\epsilon$ is possible due to part 3 of the definition of the extent function (Definition~\ref{def:ef}). Clearly $\partial \mathcal{E}_{\tau}(x) \subset \partial \mathcal{E}(x) \subset \mathcal{\bar E}(x)$. 
From the mean value theorem, for all $y \in \mathcal{\bar E}(x)$ and $y^{*} = \argmin_{\bar{y} \in \partial \mathcal{E}_\tau(x)} \norm{\bar{y} - y}$, it follows that
\begin{equation}
    h(y^*) - h(y)  \leq A \| y^* -y \| \, , \label{eq:samph}
\end{equation}
and
\begin{equation} 
\left( \frac{\partial E (x,y^*)}{\partial x} - \frac{\partial E (x,y)}{\partial x}\right) (f(x) + g(x)u)  \leq B \|y^* - y\|  \,  \label{eq:sampB}
\end{equation}
whenever $\|u\|\leq M$. Now, observe that $\norm{y^* -y} \leq 3\tau/4$. Multiplying~\eqref{eq:samph} with $-\gamma$ and~\eqref{eq:sampB} with $-1$ and then adding the inequalities yields
\begin{align*}
%\label{eq:samp_const_proof}
&\frac{\partial E (x,y)}{\partial x} (f(x) + g(x)u) + \gamma \cdot h(y)\\
&\geq  \frac{\partial E (x,y^*)}{\partial x} (f(x) + g(x)u) + \gamma \cdot h(y^*) - (B + \gamma A) 3\tau/4 \\
&\geq  (B + \gamma A) \tau/4 \,
\end{align*}
for any $u\in \mathcal{\widehat U}(x)$ where the last inequality follows from the definition of $\mathcal{\widehat U}(x)$. 

Choosing $\alpha_1(s) = (B+\gamma A)\tau/(4\epsilon) s$ and $\alpha_2(s)=\gamma s$, for any $u\in \mathcal{\widehat U}(x)$,  we have $\mathcal{M}(x,y,u)\geq 0$ for all $y$ such that $|E(x,y)|\leq \epsilon$ where $\mathcal{M}$ is as in Definition \ref{def:ec_cbf}, and therefore, in accordance with Remark~\ref{rmk:thmebf}, $h$ is a Ec-CBF and $u_\text{sampled}(x)\in \mathcal{U}(x)$ for all $x\in \mathcal{D}$. 

Since for all $y^{*} \in \partial \mathcal{E}_{\tau}(x)$, in the definition of $\widehat{\mathcal{U}}(x)$, the constraint 
\begin{equation*}
\frac{\partial E(x, y^{*})}{\partial x} (f(x) + g(x)u) + \gamma \cdot h(y^{*}) \geq (B + \gamma A) \tau
\end{equation*}
is quasi-convex in $u$, $\|u\|$ is quasi-convex in $u$, and $k$ is continuous in $x$, using~\cite[Proposition 8]{mcbfs}, we conclude that the controller $u_{\text{sampled}}(x)$ is continuous. Thus, from Theorem~\ref{thm:Ec-CBF},
the extent set $\mathcal{E}(x)$ is contained within the safe set for all $t \geq 0$ and for all $x \in \mathcal{D}$ such that $E(x(0))\subseteq \mathcal{C}$; that is, $\mathcal{E}(x(t)) \subseteq \mathcal{C}$ for all $t \geq 0$.
\end{proof}

The computational complexity of \eqref{eq:sampled_qp} scales with the cardinality of $\partial \E_{\tau}(x)$; however, as we show in the following example, choosing $\tau$ to be large can result in unwanted conservatism.  This  indicates an underlying trade-off between the performance of \eqref{eq:sampled_qp}, which scales inversely with $\tau$, and computational resources required to compute \eqref{eq:sampled_qp} online.

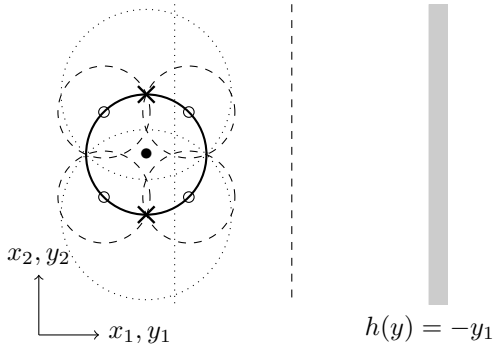
\begin{figure}
    \centering
    \begin{tikzpicture}[scale=0.8]
    \draw[thick] (0,2.5)  -- (0,-2.5) node[below] {$h(y)= -y_1$};
    \fill[black!20] (0,2.5) rectangle (0.3,-2.5); 
    
    \draw[dotted] ({-3*sqrt(2)}, 2.5) -- ({-3*sqrt(2)}, -2.5);
    
    \draw[dashed] (-2.2961, 2.5) -- (-2.2961, -2.5);

     \begin{scope}[shift={(-3.3,0)}]     
     % Large tau
     \draw[dotted] (-{sqrt(2)},1) circle ({sqrt(2)});
     \node at (-{sqrt(2)},1) {\Cross};
     \draw[dotted] (-{sqrt(2)},-1) circle ({sqrt(2)});
     \node at (-{sqrt(2)},-1) {\Cross};

     % Smaller tau
     \draw[dashed] ({-sqrt(2)+1/sqrt(2)}, {1/sqrt(2)}) circle (0.765);
     \node[draw, circle, inner sep=0pt,minimum size=4pt] at ({-sqrt(2)+1/sqrt(2)}, {1/sqrt(2)}) {};
     \draw[dashed]  ({-sqrt(2)+1/sqrt(2)}, {-1/sqrt(2)}) circle (0.765);
     \node[draw, circle, inner sep=0pt,minimum size=4pt] at ({-sqrt(2)+1/sqrt(2)}, {-1/sqrt(2)}) {};

     \draw[thick] (-{sqrt(2)},0) circle (1);
     \node at (-{sqrt(2)},0) {\textbullet};
     \end{scope}
     
     \begin{scope}[shift={(-4.71,0)}]
     \draw[dashed] ({-sqrt(2)+1/sqrt(2)}, {1/sqrt(2)}) circle (0.765);
     \node[draw, circle, inner sep=0pt,minimum size=4pt] at ({-sqrt(2)+1/sqrt(2)}, {1/sqrt(2)}) {};
     \draw[dashed]  ({-sqrt(2)+1/sqrt(2)}, {-1/sqrt(2)}) circle (0.765);
     \node[draw, circle, inner sep=0pt,minimum size=4pt] at ({-sqrt(2)+1/sqrt(2)}, {-1/sqrt(2)}) {};
     \end{scope}
    \draw[->] (-6.5,-3) -- +(0,1) node[above] {$x_2, y_2$};
    \draw[->] (-6.5,-3) -- +(1,0)  node[right] {$x_1, y_1$};
    
    \end{tikzpicture}
    
    \caption{Example of the different discretizations in Example~\ref{ex:sampling}. The black dot is the system and the solid circle its extent. In the case with two discretization points, the points are marked with crosses, and the dotted circles show all points with the distance $\tau$ away from the discretion points, using the $2$-norm. For the case of four discretion points, the points marked with circles and the distances with dashed circles. The dotted and dashed lines show how close the discretizations points can be to the barrier and still satisfy the inequality, for $\tau=2\sqrt{2}$ and $\tau=2\sqrt{2-\sqrt{2}}$ respectively.}
    \label{fig:sampledset}
\end{figure}

\begin{example}\label{ex:sampling}
Consider the system $\dot{x} = u$, where $x = \begin{bmatrix} x_1 & x_2 \end{bmatrix}^{T} \in \R^2$ is the system state and $u \in \R^2$ is a bounded control input such that $\norm{u}_{2} \leq M$ where $M = 1$.
%$u_1,\,u_2 \leq 1$
We take $E(x,y) = (x_1-y_1)^2 + (x_2-y_2)^2 - 1$ and encode the safety constraint with the safe function $h(y) = -y_1$; equivalently, the system is said to be safe if the extent set, 
here taken to be a circle of radius one, lies entirely in the closed left-half plane. This problem setting is depicted in Fig.~\ref{fig:sampledset}. %For this specific example, 
We take  Lipschitz constants $A = 1$ and $B = 2$ satisfying \eqref{eq:defA} and \eqref{eq:defB}.

We first construct the sampling based controller \eqref{eq:sampled_qp} with two samples; in this case, $\partial\mathcal{E}_{\tau} (x) = \{(x_1, x_2 \pm 1)\}$ such that $\tau = 2\sqrt{2}$.
The controller \eqref{eq:sampled_qp} then has two constraints, namely,
\begin{equation}\label{exconst}
    \pm 2u_2 \geq  (B + \gamma A) \tau  - \gamma h(y^*) =  (B + \gamma A) \tau  + \gamma  x_1\,,
\end{equation}
which can only be satisfied when $x_1 \leq - (\frac{B}{\gamma} +  A) \tau$. Observe that $\gamma$ plays a key role in the behavior of the system. A high value of $\gamma$ will render $-(\frac{B}{\gamma} + A)\tau \approx - A \tau$. Thus, the term $B$, which impacts the effect of the dynamics in the extent-barrier constraint, vanishes. This is in line with the intuition one observes in the traditional barrier function case, where a high slope of the class-$\mathcal{K}$ function will allow the system to quickly approach the boundary of the safe set.
Interestingly, the control input $u_1$ is not restricted by \eqref{exconst}; however, it has still to satisfy $\norm{u}_2 \leq M$ and the resulting condition on $x_1$ is quite conservative.

To obtain a less conservative solution,
one must increase the number of sampling points, and hence decrease the value of $\tau$. By doing so, it is more likely that the matrix
\begin{equation*}
    S =
    \begin{bmatrix} 
     \frac{\partial E (x,y^{(1)})}{\partial x}^T &  \frac{\partial E (x,y^{(2)})}{\partial x}^T & \ldots &  \frac{\partial E (x,y^{(n)})}{\partial x}^T 
    \end{bmatrix} \,,
\end{equation*}
where $y^{(1)},\, y^{(2)},\, \ldots,\, y^{(n)}$ denote the samples, has full rank. If the matrix $S$ has full rank, both the control inputs $u_1$ and $u_2$ will be constrained.

We demonstrate this property by considering a sampling based controller which uses four samples; that is, we take $\partial \mathcal{E}_\tau(x) = \{ (x_1 \pm 1/\sqrt{2}, x_2 \pm 1/\sqrt{2}), (x_1 \pm 1/\sqrt{2}, x_2 \mp 1/\sqrt{2})\} $, such that $\tau = 2\sqrt{2 - \sqrt{2}}$.  In this case the matrix $S$ has full rank and the controller \eqref{eq:sampled_qp} has four constraints,
\begin{align}
\label{exconst1}
\sqrt{2}(-u_1 \pm  u_2) - \gamma\left(x_1 - \sqrt{2}\right) &\geq (B+\gamma A)\tau \,, \\
\label{exconst2}
\sqrt{2}(u_1 \pm  u_2) - \gamma\left(x_1 + \sqrt{2}\right) &\geq (B+\gamma A)\tau \,.
\end{align}
Importantly, this new instantiation has reduced conservatism, i.e., solutions to \eqref{exconst1}--\eqref{exconst2} exist for all states such that $x_1 \leq -(\frac{B}{\gamma} + A)\tau$, but now with a smaller $\tau$. Moreover, in the instance that $x_1 = -(\frac{B}{\gamma} + A)\tau$, the only feasible solution is $(u_1, u_2) = (-\gamma, 0)$ with $\gamma \leq M$, which will effectively steer the system away from the barrier. Again, observe that $\gamma$ plays a key role in the behavior of the system. A higher value of $\gamma$ will allow for the system to get closer to the barrier, but once the system is close to the boundary, a more aggressive control action,  i.e., $u_1 = \gamma$ is applied.
\end{example}

This simple example is designed to illustrate the importance of the sampling choice, but, as discussed in the introduction, an alternative approach to enforcing safety for this specific example is to artificially contract the safe set by shifting the unsafe region to the left by one unit to account for the extent set. However, as we will see in the following case studies, generating the appropriate amount of contraction for complicated polynomial safe sets is not straightforward.
\section{CASE STUDIES}
\label{sec:case_studies}
In this section, we present two implementations\footnote{Code -- \url{https://github.com/gtfactslab/ExtentCBF}} of the proposed framework on a system with unicycle dynamics
\begin{align*}
    \dot{x}_{1} &= v \cdot \cos(\phi) \,, \\
    \dot{x}_{2} &= v \cdot \sin(\phi) \,, \\
    \dot{\phi} &= \omega \,,
\end{align*}
where $x_{1} \in \mathbb{R}$, $x_{2} \in \mathbb{R}$ are the position coordinates of the robot, $\phi \in [-\pi, \pi)$ is the orientation, and $v \in \mathbb{R}$ and $\omega \in \mathbb{R}$ are the linear velocity input and angular velocity input. Define $\widehat{x} = \begin{bmatrix} x_1 & x_2 \end{bmatrix}^{T}$, and $x = \begin{bmatrix} x_1 & x_2 & \phi \end{bmatrix}^{T}$.

Case study 1 presents a MATLAB simulation of the sum-of-squares (SOS) program based controller whereas Case study 2 is an implementation conducted on the Robotarium \cite{robotarium} (a multi-robot test bed at Georgia Tech) which uses the sampling based controller on a differential drive mobile robot. 

\subsection{Case Study 1: SOS Based Approach}
Define the safe set $\mathcal{C} = \{\widehat{x} \in \mathbb{R}^2 \mid h(\widehat{x}) \geq 0 \}$ where $h(x) = 1 - \widehat{x}^{T}\widehat{x}$. The extent set $\mathcal E(x)$ is a fourth-order superellipse described by the extent function 
\begin{multline*}
    E(x,\, y) = (1.5)^4 (\Delta_1\cos{(\phi)} + \Delta_2\sin{(\phi)})^4 + \\ (2)^4 (-\Delta_1\sin{(\phi)} + \Delta_2\cos{(\phi)})^4 - (0.2)^4 
\end{multline*}
where $\Delta_i = (x_i - y_i)$ for $i = 1,2$.
In Fig~\ref{fig:cs1}, the system trajectory under the nominal control input when no barrier functions are enforced is characterized by the orange dotted line. As can be seen, the system violates the safe set since there are no safety constraints enforced on the system's motion. The trajectory when the system's input is filtered with %is subjected to the nominal input in the presence of 
a traditional ZCBF based controller, with \eqref{eq:zcbf_controls}
 as the constraint, is characterized by the purple line. Observe that the state $x$ remains within the safe set but a part of the extent set moves out of the safe set. The system trajectory when the system's input is filtered using the SOS based Ec-CBF controller in Section~\ref{sec:Ec-CBF_qp} is characterized by the solid line. Observe that the state $x$ as well as the extent set $\mathcal{E}(x)$ are always contained within the safe set.
\begin{figure}
    \centering
    \input{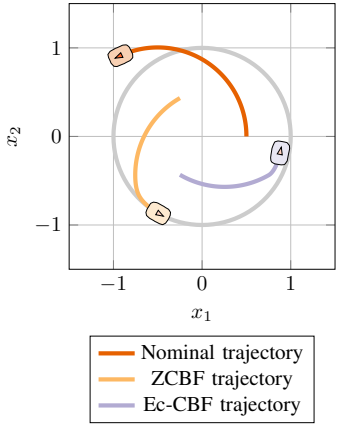}

    \caption{Comparison of the system trajectory without a CBF controller, with a traditional ZCBF controller (QP with \eqref{eq:zcbf_controls} as a constraint), and the proposed Ec-CBF controller (Section~\ref{sec:Ec-CBF_qp}) that ensures the extent set is always contained within the safe set. %In the case of the traditional CBF, the extent set is not contained within the safe set whereas with our proposed approach, the extent set is contained within the safe set for all $t \geq 0$.
    }
    \label{fig:cs1}
    
\end{figure}

\subsection{Case Study 2: Sampling Based Controller}
In this implementation, we consider a second order ellipsoidal extent set of the form $E(x,y) = (\widehat{x}-y)^{T}R(-\phi)^{T} P R(-\phi) (\widehat{x}-y) - 1^2$ where $P = \text{diag}({1.5^{-2}, 1^{-2}})$ is the matrix representing the dimensions of the ellipse. We consider a differential drive mobile robot with unicycle kinematics as defined before. %In this case study, we utilize the sampling based controller as in \eqref{eq:sampled_qp} and the constraint from Theorem~\ref{thm:sampling}.
The safe set $\mathcal{C}$ is characterized by a weighted polar $L_p$ function \cite{mohitCDC201_Lp}, with parameters $\sigma = (1, 0.4)$, $\theta_{k} = \frac{\pi}{2}$, $\kappa = \frac{\theta_{k}}{2 \sigma(1)}$, $p = 4$, $\theta_{0} = \mbox{sign}(\kappa) \cdot \frac{\pi}{2}$. %, to capture complex shapes. 
%The objective is to ensure that the robot along with its ellipsoidal extent set $\mathcal{E}(x)$ is contained within the $L_p$ safe set for all $t \geq 0$. 
The controller as in \eqref{eq:sampled_qp} is used, and from Theorem~\ref{thm:sampling}, safety of the robot is guaranteed. The extent set boundary was sampled with $200$ points. Thus, the QP \eqref{eq:sampled_qp} was solved with $200$ constraints point-wise in time. The time for solving the QP at each time step was between 10ms to 15ms. Figure~\ref{fig:case_study_2} shows a still shot of the implementation conducted on the Robotarium\footnote{Video of experiment -- \url{https://youtu.be/uvrBvclD8ME}}. As can be seen, the robot and its extent set are contained within the $L_p$ safe set for all $t \geq 0$.
\begin{figure}
\begin{center}
    \includegraphics[width=0.85\columnwidth]{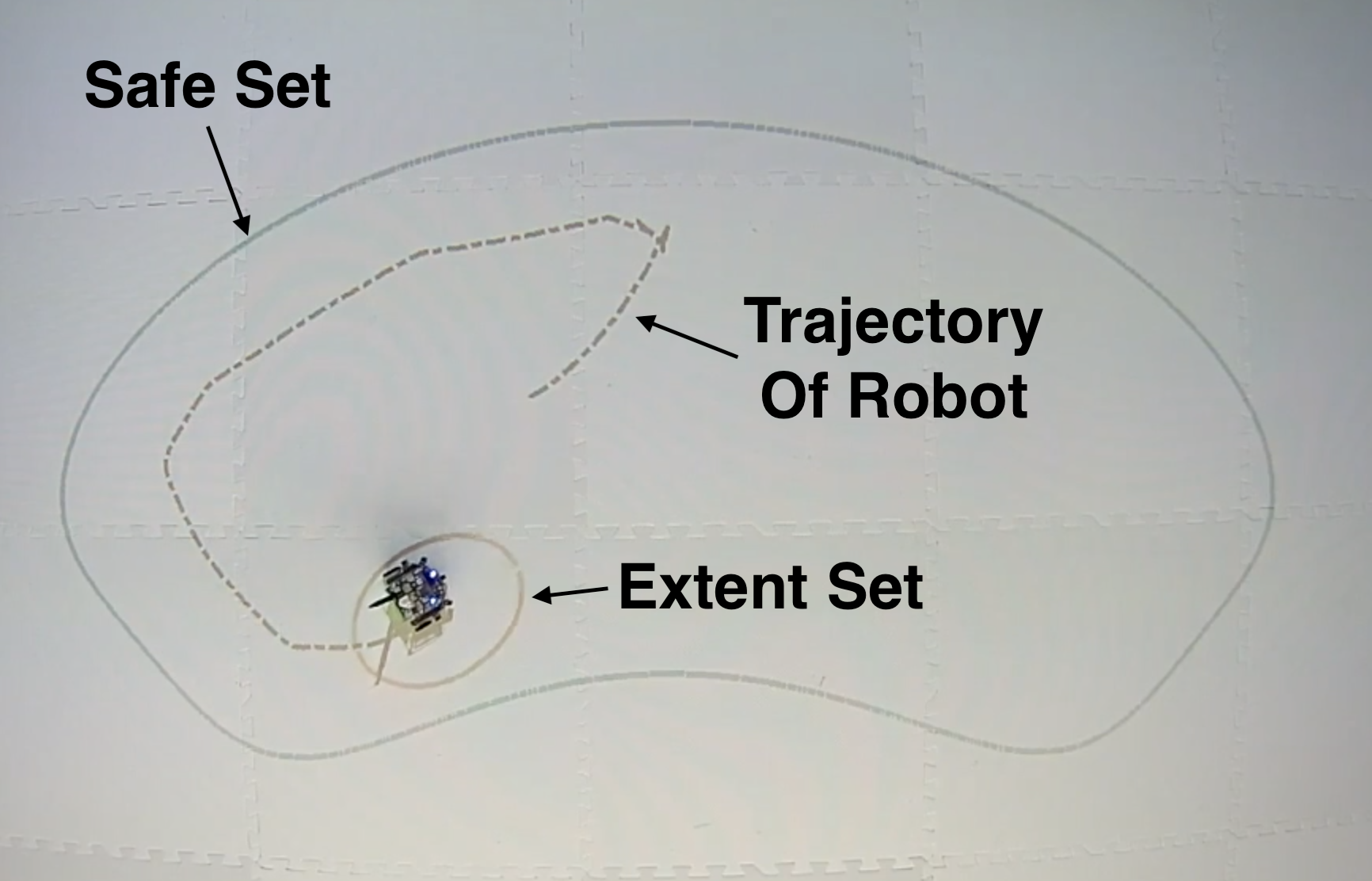}
      \caption{Robotarium implementation of the proposed extent compatible control barrier functions (Ec-CBF) framework. In particular, the sampling based controller \eqref{eq:sampled_qp} is implemented. The safe set is the $L_p$ function (colored in blue). The robot is encapsulated by the ellipsoidal extent set (colored in yellow). The proposed sampling based controller \eqref{eq:sampled_qp} ensures that the extent set is always contained within the $L_p$ safe set.}
\label{fig:case_study_2}
\end{center}
\end{figure}

\section{CONCLUDING REMARKS} 
\label{sec:conclusion}
This paper presents a barrier function based method for ensuring the safety of a control-affine dynamical system that incorporates its physical volume, i.e., its extent. The first resulting controller design relies on a sum-of-squares based optimization program. Since sum-of-squares programs can be computationally difficult, a sampling based method is also presented. The proposed sampling based controller is shown to retain the guarantee on safety of the system. %, and is also shown to be continuous. 
Simulation and robotic implementation results are also provided.
\bibliographystyle{ieeetr}
\bibliography{bib.bib}
\end{document}